\DeclareSymbolFont{rsfs}{U}{rsfs}{m}{n}
\DeclareSymbolFontAlphabet{\mathscr}{rsfs}
\newcommand{\mscr}{\mathscr}
\newcommand{\dv}{\mathbf} 
\newcommand{\mc}{\mathcal} 
\newtheorem{theorem}{Theorem}
\newtheorem{lemma}{Lemma}
\newtheorem{definition}{Definition}
\begin{document}
\title{Rate Regions for the Partially-Cooperative Relay Broadcast Channel with Non-causal Side Information}

\author{
\authorblockN{Abdellatif Zaidi}
\authorblockA{Communication and Remote Sensing Lab. (TELE)\\
Universit\'e Catholique de Louvain\\
Louvain-La-Neuve, 1348, Belgium\\
Email: Abdellatif.Zaidi@ensta.org}
\and
\authorblockN{Luc Vandendorpe}
\authorblockA{Communication and Remote Sensing Lab. (TELE)\\
Universit\'e Catholique de Louvain\\
Louvain-La-Neuve, 1348, Belgium\\
Email: Luc.Vandendorpe@tele.ucl.ac.be}
}
\bibliographystyle{IEEEtran}
\maketitle

\begin{abstract}
In this work, we consider a partially cooperative relay broadcast channel (PC-RBC) controlled by random parameters. We provide rate regions for two different situations: 1) when side information (SI) $S^n$ on the random parameters is non-causally known at both the source and the relay and, 2) when side information $S^n$ is non-causally known at the source only. These achievable regions are derived for the general discrete memoryless case first and then extended to the case when the channel is degraded Gaussian and the SI is additive i.i.d. Gaussian. In this case, the source uses generalized dirty paper coding (GDPC), i.e., DPC combined with partial state cancellation, when only the source is informed, and DPC alone when both the source and the relay are informed. It appears that, even though it can not completely eliminate the effect of the SI (in contrast to the case of source and relay being informed), GDPC is particularly useful when only the source is informed.
\end{abstract}
\section{Introduction}\label{secI}
A three-node relay broadcast channel (RBC) is a communication network where a source node transmits both common information and private information sets to two destination nodes, destination $1$ and destination $2$, that cooperate by exchanging information. This may model "downlink" communication systems that exploit relaying and user cooperation to improve reliability and throughput. In this work, we consider the RBC in which only one of the two destinations (e.g., destination $1$) assists the other destination. This channel is referred to as {\it partially cooperative RBC} (PC-RBC) \cite{LV05,LK06}. Moreover, we assume that the channel is controlled by random parameters and that side information $S^n$ on these random parameters is non-causally known either at both the source and destination $1$ (i.e., the relay) (we refer to this situation as {\it PC-RBC with informed source and relay}) or at the source only (we refer to this situation as {\it PC-RBC with informed source only}). The random state may represent random fading, interference imposed by other users, etc. (see \cite{BPS98} for a comprehensive overview on state-dependent channels). The PC-RBC under investigation is shown in Fig.~\ref{RelayBroadcastChannelWithState}. It includes the standard relay channel (RC) as a special case, when no private information is sent to destination $1$, which then simply acts as relay for destination $2$.\\ 
\begin{figure}[htpb]
\centering
\psfrag{source}[][][0.75]{Tx}
\psfrag{destination}[][][0.75]{Rx $2$}
\psfrag{node2}[][][0.75]{ Relay}
\psfrag{a}[][][0.75]{\bf A}
\psfrag{b}[][][0.75]{\bf B}
\psfrag{Rx2}[][][0.75]{Rx $1$}
\psfrag{input}[][][0.5]{$(W_0,W_1,W_2) \:\:\qquad$}
\psfrag{input_node2}[][][0.75]{$Y_1$}
\psfrag{output_node2}[][][0.75]{$\:\:X_2$}
\psfrag{output_source}[][][0.75]{$\:\:X_1$}
\psfrag{received_signal}[][][0.75]{$Y_2$}
\psfrag{state}[][][0.75]{$S^n$}
\psfrag{and}[][][0.75]{$\&$}
\psfrag{channel_distribution}[][][0.75]{$p(y_1,y_2|x_1,x_2,s)$}
\psfrag{output}[][][0.75]{$\qquad (\hat{W}_0,\hat{W}_2)$}
\includegraphics[width=0.9\linewidth]{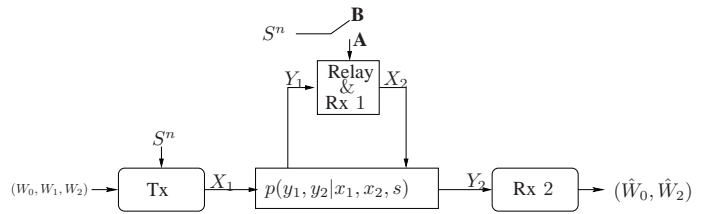}
\caption{Partially-cooperative relay broadcast channel (PC-RBC) with state information $S^n$ non-causally known either at both the source and the relay (A) or at the source only (B).}
\label{RelayBroadcastChannelWithState}
\end{figure}
For the discrete memoryless PC-RBC with informed source and relay (Section \ref{secII}), we derive an achievable rate region based on the relay operating in the decode-forward (DF) scheme. We also show that this region is tight and provides the full capacity region when the channel outputs are corrupted by degraded Gaussian noise terms and the SI $S^n$ is additive i.i.d. Gaussian (referred to as {\it D-AWGN partially cooperative RBC}). Similarly to \cite{C83,KSS04}, it appears that, in this case, the SI does not affect the capacity region, even though destination $2$ has no knowledge of the state. The result on the property that a known additive state does not affect capacity (as long as full knowledge of this state is available at the transmitter) has been initially established for single-user Gaussian channel in \cite{C83}, and then extended to some other multi-user Gaussian channels in \cite{KSS04}.

For the PC-RBC with informed source only (Section \ref{secIII}), we derive achievable rate regions for the discrete memoryless and the D-AWGN memoryless cases, based on the relay operating in DF. The D-AWGN  case uses generalized dirty paper coding (GDPC), which allows arbitrary (negative) correlation between codewords and the SI, at the source. In this case, we show that, even though the relay is uninformed, it benefits from the availability of the SI at the source, which then helps the relay by allocating a fraction of its power to cancel the state, and uses the remaining of its power to transmit pure information using DPC. However, even though this region is larger than that obtained by DPC alone (i.e., without partial state cancellation), the effect of the state can not be completely canceled as in the case when both the source and the relay are informed.

The results in this paper readily apply to the standard relay channel (RC), as a special case of a PC-RBC when no private information is sent to destination $1$. More generally, they shed light on cooperation between informed and uninformed nodes and can in principle be extended to channels with many cooperating nodes, with only a subset of them being informed. Section \ref{secIV} gives an illustrative numerical example. Section \ref{secV} draw some concluding remarks. Proofs are relegated to Section \ref{secV}.

\section{Partially-Cooperative RBC with Informed Source and Relay}\label{secII}
Consider the channel model for the discrete memoryless PC-RBC with informed source and relay denoted by $\{\mscr X_1{\times}\mscr X_2,p(y_1,y_2|x_1,x_2,s),\mscr Y_1\times\mscr Y_2,\mscr S\}$ and depicted in Fig.\ref{RelayBroadcastChannelWithState}. It consists of a source with input $X_1$, a relay with input $X_2$, a state-dependent probability distribution $p(y_1,y_2|x_1,x_2,s)$ and two channel outputs $Y_1$ and $Y_2$ at destinations $1$ (the relay) and $2$, respectively. The source sends a common message $W_0$ that is decoded by both destinations and private messages $W_1$ and $W_2$ that are decoded by destinations $1$ and $2$, respectively.

In this section, we consider the scenario in which the PC-RBC is embedded in some environment with SI $S^n$ available non-causally at both the source and the relay. We assume that $S_i$'s are i.i.d. random variables $\sim p(s)$, $i=1,\hdots,n$, and that the channel is memoryless.
\subsection{Inner bound on capacity region}\label{secII_subsecB}
The following Lemma gives an inner bound on capacity region for the PC-RBC with informed source and relay, based on the relay operating in the decode-and-forward (DF) scheme.

\begin{lemma}\label{AchievableRatePartiallyCooperativeRBCWithState}
For a discrete memoryless partially cooperative relay broadcast channel $p(y_1,y_2|x_1,x_2,s)$ with state information $S^n$ non-causally available at the source and destination $1$ (which also acts as a relay for destination $2$) but not at destination $2$, a rate tuple $(R_0,R_1,R_2)$ is achievable if
\begin{align}
&R_1 \: < \:\: I(X_1;Y_1|SU_1X_2),\nonumber\\
&R_0+R_2 < \min \: \Big\{I(U_2;Y_1|SU_1),I(U_1U_2;Y_2)-I(U_1U_2;S)\Big\},
\label{AchievableRegionTheorem1}
\end{align}
for some joint distribution of the form $$p(s)p(u_1,u_2,x_1,x_2|s)p(y_1|x_1,x_2,s)p(y_2|y_1,x_2),$$ where $U_1$ and $U_2$ are auxiliary random variables with finite cardinality bounds.
\end{lemma}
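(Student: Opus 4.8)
The plan is to establish achievability via a block-Markov coding scheme combining decode-and-forward relaying with Gel'fand–Pinsker binning against the state, using superposition to carry the common/private message split. First I would fix a joint distribution $p(s)p(u_1,u_2,x_1,x_2|s)p(y_1|x_1,x_2,s)p(y_2|y_1,x_2)$ and transmit over $B$ blocks, each of length $n$, so that the rate loss from the last (dummy) block vanishes as $B\to\infty$. The message $W_0$ is split as usual, and the pair $(W_0,W_2)$ is conveyed cooperatively by the source and relay using the DF principle; I will also send $W_1$ to the relay only, superimposed on the cooperative layer.

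The codebook construction is the crux. For each block $b$, generate $U_1^n$ codewords (a bin for each cooperative message, to be refined by the state) indexed so that there are roughly $2^{nI(U_1;S)}$ sequences per message-bin — these carry the "resolution" or bin index $w_0w_2$ from the previous block in the DF chain. Superimposed on a chosen $u_1^n$, generate $U_2^n$ codewords, again arranged in bins of size about $2^{nI(U_2;S|U_1)}$ so that for the current cooperative message there is, with high probability, at least one jointly typical $(u_1^n,u_2^n,s^n)$ triple; this is the standard Gel'fand–Pinsker covering step, which succeeds provided the combined binning rate exceeds $I(U_1U_2;S)$. Finally, superimposed on $(u_1^n,u_2^n,s^n)$, generate $X_1^n$ for the private message $W_1$, and let the relay form $X_2^n$ as a function of its decoded cooperative message and the state (so effectively $X_2^n$ is drawn from $p(x_2|u_1,s)$ or jointly with $u_1$). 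Encoding: the source, knowing $s^n$ non-causally, picks the $(u_1^n,u_2^n)$ pair in the designated bins jointly typical with $s^n$, then the $x_1^n$ carrying $W_1$; the relay, having decoded the cooperative index in the previous block, transmits the matching $x_2^n$.

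For decoding I would run three steps. The relay (destination $1$) decodes, at the end of block $b$: (i) the fresh cooperative index via $U_2^n$ given its knowledge of $U_1^n$, $s^n$, and $x_2^n$, which needs $R_0+R_2 < I(U_2;Y_1|SU_1)$ (the $U_1$ part having been resolved in the previous block), and (ii) the private index $W_1$ from $Y_1$ given $(s^n,u_1^n,x_2^n)$ and the decoded $u_2^n$, which needs $R_1 < I(X_1;Y_1|SU_1X_2)$ — note $U_2$ is a function of the decoded message so it need not appear as a separate conditioning symbol beyond what $X_1$ already captures given the Markov structure. Destination $2$, which does not know $s^n$, uses backward decoding (or sliding-window): after all $B$ blocks, it decodes the cooperative message from block $b$ using $Y_2^n$ and its knowledge of the block-$(b+1)$ codeword, recovering $(U_1^n,U_2^n)$ jointly; the covering/packing tradeoff then yields the constraint $R_0+R_2 < I(U_1U_2;Y_2) - I(U_1U_2;S)$. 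Combining the relay and destination-$2$ constraints on the cooperative rate gives the $\min\{\cdot,\cdot\}$ in \eqref{AchievableRegionTheorem1}.

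The main obstacle I anticipate is bookkeeping the interaction between DF block-Markov encoding and Gel'fand–Pinsker binning simultaneously for two auxiliary variables: one must ensure that the bin index used for state-matching in block $b$ is exactly what the relay (and, via backward decoding, destination $2$) has available from adjacent blocks, so that the "effective" rate carrying true information is $(R_0+R_2)$ while the extra binning rate $I(U_1U_2;S)$ is absorbed without collision. Making the three error events (covering failure at the encoder, decoding failure at the relay for the cooperative and for the private message, decoding failure at destination $2$) each vanish requires carefully choosing the bin sizes and invoking the standard mutual-covering and packing lemmas; I would also verify that the Markov chain $U_1U_2 \to (X_1,X_2,S) \to Y_1 \to Y_2$ imposed by the distribution is respected so that $I(U_1U_2;Y_2)$ is the right quantity (rather than something involving $Y_1$). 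A secondary, routine obstacle is the cardinality bounds on $U_1,U_2$, which follow from the usual Carathéodory/Fenchel-Eggleston argument and I would simply assert.
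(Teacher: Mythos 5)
The paper does not actually spell out a proof of this lemma; it states that the argument parallels the regular-encoding/sliding-window proof given for Lemma~\ref{AchievableRatePartiallyCooperativeRBCInformedSource} and omits it. Your architecture --- block-Markov decode-and-forward, superposition of a private layer on a cooperative layer, Gel'fand--Pinsker binning at the source paid for only at the uninformed destination $2$, and sliding-window/backward decoding there --- is exactly that architecture, and your derivations of the two cooperative constraints $R_0+R_2<I(U_2;Y_1|SU_1)$ (relay knows $S$, so no binning penalty) and $R_0+R_2<I(U_1U_2;Y_2)-I(U_1U_2;S)$ are the standard ones and are fine.

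The genuine gap is in your justification of $R_1<I(X_1;Y_1|SU_1X_2)$. You decode $W_1$ at destination $1$ \emph{after} recovering $u_2^n$, with $(s^n,u_1^n,x_2^n)$ also in hand; the packing bound for that successive-decoding step is $I(X_1;Y_1|SU_1U_2X_2)$, not $I(X_1;Y_1|SU_1X_2)$. Your parenthetical claim that $U_2$ ``need not appear as a separate conditioning symbol'' is false in general: using the Markov chain $(U_1,U_2)\ominus(X_1,X_2,S)\ominus Y_1$ one gets
\begin{equation}
I(X_1;Y_1|SU_1X_2)-I(X_1;Y_1|SU_1U_2X_2)=I(U_2;Y_1|SU_1X_2),
\end{equation}
which is strictly positive precisely when the cooperative layer carries positive rate through $Y_1$ --- i.e., at every operating point of interest. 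If instead you decode $(U_2,X_1)$ jointly at the relay, the event ``both wrong'' produces a \emph{sum}-rate constraint $R_0+R_1+R_2<I(X_1;Y_1|SU_1X_2)$ together with the individual constraint $R_1<I(X_1;Y_1|SU_1U_2X_2)$, which again is not the stated individual bound on $R_1$. So as written your scheme establishes the region with $I(X_1;Y_1|SU_1U_2X_2)$ in place of $I(X_1;Y_1|SU_1X_2)$; you need either to supply an argument for why the larger conditioning-free quantity is attainable or to flag the discrepancy. (Note this does not endanger Theorem~\ref{PPED-AWGNPartiallyCooperativeRBC}: with the Gaussian choice (\ref{AuxiliaryRandomVariablesForRBWithState}) the rate actually claimed for $R_1$ is $C({\gamma}P_1/N_1)=I(X_1';Y_1|SU_1U_2X_2)$, which lies below both mutual informations.)
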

The proof is similar to that, given in Section \ref{secVI}, for Lemma \ref{AchievableRatePartiallyCooperativeRBCInformedSource} (see below). However, it is more lengthy. We omitted it here for brevity. 
\subsection{D-AWGN Partially Cooperative RBC}\label{secII_subsecC}
We now assume that the state is additive i.i.d. Gaussian. Furthermore, we assume that the channel outputs are corrupted by degraded Gaussian noise terms. We refer to this channel as the D-AWGN PC-RBC with informed source and relay,  meaning that there exist random variable $Z_1 \sim \mc N(0,N_1)$ and $Z'_2 \sim \mc N(0,N_2-N_1)$ with $N_1 < N_2$, independent of each other and independent of the state $S^n$, such that
\begin{align}
Y_1&=X_1+S+Z_1,\nonumber\\
Y_2&=Y_1+X_2+Z'_2.
\label{DegradedGaussianPC-RBCWithState}
\end{align}
 The channel input sequences $\{x_{1,n}\}$ and $\{x_{2,n}\}$ are subject to power constraints $P_1$ and $P_2$, respectively, i.e., $\sum_{i=1}^{n}x_{1i}^2 \leq nP_1$ and  $\sum_{i=1}^{n}x_{2i}^2 \leq nP_2$; and the state $S^n$ is distributed according to $\mc N(0,QI)$.

The D-AWGN PC-RBC with no state has been introduced and studied in \cite{LV05}. It has been shown that its capacity region is given by the region with the rate tuples $(R_0,R_1,R_2)$ satisfying \cite{LV05}
\begin{subequations}
\begin{align}
\label{Rate1R1D-AWGNPartiallyCooperativeRBCWithoutState}
&R_1 \: \:<\:\: C\Big(\frac{{\gamma}P_1}{N_1}\Big)\\
\label{Rate2R2D-AWGNPartiallyCooperativeRBCWithoutState}
&R_0+R_2 < \max_{\beta}\: \min  \bigg\{C(\frac{{\beta}\bar{\gamma}P_1}{{\gamma}P_1+N_1}),\nonumber\\
&\hspace{3cm}C\Big(\frac{\bar{\gamma}P_1+P_2+2\sqrt{\bar{\beta}\bar{\gamma}P_1P_2}}{{\gamma}P_1+N_2}\Big)\bigg\},
\end{align}
\label{CapacityRegionD-AWGNPartiallyCooperativeRBCWithoutState}
\end{subequations}
for some $\gamma \in [0,1]$, where $\bar{\gamma}=1-\gamma$, $\bar{\beta}=1-\beta$ and $C(x):=0.5\log_2(1+x)$.

We now turn to the case when there is an additive i.i.d. SI $S^n$ which is non-causally known to both the source and destination $1$ (the relay) but not to destination $2$. We obtain the following result, similar in nature (and in proof) to that provided for a physically degraded Gaussian RC in \cite[Theorem 3]{KSS04}.

\begin{theorem}\label{PPED-AWGNPartiallyCooperativeRBC}
The capacity region of the D-AWGN Partially Cooperative Relay Broadcast Channel with state information non-causally available at the source, destination $1$ (the relay) but not destination $2$ is given by the standard capacity (\ref{CapacityRegionD-AWGNPartiallyCooperativeRBCWithoutState}).
\end{theorem}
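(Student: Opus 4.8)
\emph{Proof idea.} The plan is to prove the two inclusions separately: achievability of the stateless region (\ref{CapacityRegionD-AWGNPartiallyCooperativeRBCWithoutState}) for the state-dependent channel, and a matching converse showing that the state cannot enlarge it.

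\emph{Achievability.} I would specialize the discrete-memoryless inner bound of Lemma~\ref{AchievableRatePartiallyCooperativeRBCWithState} to the Gaussian channel (\ref{DegradedGaussianPC-RBCWithState}) with jointly Gaussian auxiliaries, mirroring the standard decode-and-forward construction for the degraded Gaussian relay/broadcast channel: split the source power as $\gamma P_1$ for the fresh private stream to destination~$1$ and $\bar{\gamma}P_1$ for the cooperative stream (carrying $W_0$ and $W_2$) that the relay decodes and then forwards coherently using its power $P_2$, with a power split $\beta$ on the cooperative part. The one new ingredient is the treatment of $S$. At the relay (destination~$1$), which knows $S$, one simply forms $Y_1-S=X_1+Z_1$, so that $I(X_1;Y_1|SU_1X_2)$ and $I(U_2;Y_1|SU_1)$ reduce to their stateless counterparts. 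For destination~$2$, which observes $Y_2=X_1+X_2+S+Z_1+Z'_2$ without knowing $S$, the source and the relay (both informed) dirty-paper code against $S$: take $U_1,U_2$ of the Costa form ``signal $+\,\alpha S$'', with the auxiliary signals carrying the cooperative/common message, and choose the inflation factor $\alpha$ exactly as in Costa's scheme for the effective power seen at destination~$2$. Then $I(U_1U_2;Y_2)-I(U_1U_2;S)$ collapses to the corresponding stateless mutual information, and after the maximization over $\beta$ one recovers precisely (\ref{Rate1R1D-AWGNPartiallyCooperativeRBCWithoutState})--(\ref{Rate2R2D-AWGNPartiallyCooperativeRBCWithoutState}). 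A continuity/cardinality argument, as in \cite{KSS04}, justifies using Gaussian auxiliaries despite the finite-alphabet statement of Lemma~\ref{AchievableRatePartiallyCooperativeRBCWithState}.

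\emph{Converse.} I would use a genie argument: reveal $S^n$ to destination~$2$ as well. This cannot shrink the capacity region (destination~$2$ may ignore the genie), so it suffices to show that (\ref{CapacityRegionD-AWGNPartiallyCooperativeRBCWithoutState}) is an outer bound when $S^n$ is known at all three nodes. But then each receiver can pre-subtract $S^n$, yielding $\tilde{Y}_1=X_1+Z_1$ and $\tilde{Y}_2=\tilde{Y}_1+X_2+Z'_2$, i.e.\ exactly the stateless D-AWGN PC-RBC of \cite{LV05}; the extra knowledge of $S^n$ at the encoders is independent of the messages and does not increase capacity (the same reduction used for the physically degraded relay channel in \cite[Theorem~3]{KSS04}). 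Hence the converse of \cite{LV05} applies verbatim and bounds the region by (\ref{CapacityRegionD-AWGNPartiallyCooperativeRBCWithoutState}); since the genie only enlarges the region, the same bound holds for the original channel. This is the PC-RBC analogue of Costa's observation \cite{C83} that a transmitter-known additive Gaussian state is costless.

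\emph{Main obstacle.} The delicate part is the achievability: choosing the jointly Gaussian structure of $(U_1,U_2,X_1,X_2)$ as a function of $S$ so that, simultaneously, (i) the cooperative signal the relay must decode is clean of $S$ after the relay's subtraction, (ii) the relay's forwarded signal remains coherent with the source's cooperative component, and (iii) a single DPC inflation parameter makes $I(U_1U_2;Y_2)-I(U_1U_2;S)$ equal the stateless value at the right effective SNR, rather than leaking a term through the correlation between $S$ and the coherent combination at destination~$2$. Once the auxiliaries are fixed, the evaluation of the three mutual informations and the reduction to (\ref{CapacityRegionD-AWGNPartiallyCooperativeRBCWithoutState}) is routine, and the converse is essentially immediate given \cite{LV05} and the genie/pre-subtraction reduction.
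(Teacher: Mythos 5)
Your proposal matches the paper's proof in both halves: the paper likewise observes, following Costa \cite{C83}, that only achievability needs proving (since the stateless region (\ref{CapacityRegionD-AWGNPartiallyCooperativeRBCWithoutState}) is an outer bound when the state is revealed everywhere), and establishes achievability by evaluating the decode-and-forward inner bound of Lemma~\ref{AchievableRatePartiallyCooperativeRBCWithState} with jointly Gaussian Costa-form auxiliaries. The one detail you flag as delicate is resolved in the paper by using \emph{two} inflation factors $\alpha_1,\alpha_2$ (one per superposed cooperative layer, with $U_k \sim \mathcal{N}(\alpha_k S, P^{(k)})$ and the relay input embedded in $U_1-\alpha_1 S$) rather than a single one, so that the state term cancels for the coherent combination seen at destination~2.
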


\begin{proof}
Similarly to Costa's approach \cite{C83}, we need only prove the achievability of the region, which follows  by evaluating the region (\ref{AchievableRegionTheorem1}) with the input distribution given by (\ref{AuxiliaryRandomVariablesForRBWithState}). Note that region (\ref{AchievableRegionTheorem1}) has been established for the discrete memoryless case but it can be extended to memoryless channels with discrete time and continuous alphabets using standard techniques \cite{G68}. The choice of $p(u_1,u_2,x_1,x_2|s)$ is given by 
\begin{subequations}
\begin{align}
\label{AuxiliaryRandomVariablesU1U2}
&U_1 \sim \mc N(\alpha_1S,P^{(1)}),U_2 \sim \mc N({\alpha}_2S,P^{(2)})\\
&X_2=(1-\lambda)(U_1-\alpha_1S),\:\: \lambda=\frac{\sqrt{\bar{\beta}\bar{\alpha}P_1}}{\sqrt{P^{(1)}}},\\
&X_1' \sim \mc N(0,{\gamma}P_1),\\
&X_1=\lambda(U_1-\alpha_1S)+(U_2-\alpha_2S)+X_1',
\end{align}
\label{AuxiliaryRandomVariablesForRBWithState}
\end{subequations}
where $P^{(1)}=(\sqrt{\bar{\beta}\bar{\alpha}P_1}+\sqrt{P_2})^2$, $P^{(2)}={\beta}\bar{\alpha}P_1$ and
\begin{align}
&\alpha_k=\frac{P^{(k)}}{P^{(1)}+P^{(2)}+({\alpha}P_1+N_2)},\:\:k=1,2.\nonumber
\end{align}
Furthermore, we let $X'_1$ be independent of $U_1$, $U_2$ and the state $S$.
\end{proof}
A (more intuitive) alternative approach is as follows. The source uses superposition coding to send the information intended for destination $1$, on top of that intended for destination $2$ (and carried through the relay). We decompose the source input $X_1$ into two parts, $X_1'$ with power ${\alpha}P_1$ (stands for the information intended for destination $1$), and $U$ with power $\bar{\alpha}P_1$ (stands for the information intended for destination $2$), i.e., $X_1=X_1'+U$. For the transmission of $U$, both the source and destination $1$ know the state $S^n$ and cooperate over a relay channel (considering $X'_1$ as noise) to achieve the rate (\ref{Rate2R2D-AWGNPartiallyCooperativeRBCWithoutState}) \cite{KSS04}. Next, to decode its own message, destination $1$ first peals $S$ and $U$ to make the channel $Y_1$ equivalent to $Y_1'=X_1'+Z_1$. This gives us the rate (\ref{Rate1R1D-AWGNPartiallyCooperativeRBCWithoutState}) for message $W_1$.

\section{Partially-Cooperative RBC with Informed Source Only}\label{secIII}
In this section, we assume that only the source non-causally knows the SI $S^n$.
\subsection{Discrete memoryless PC-RBC}\label{secIII_subsecA}
The following Lemma gives an inner bound on capacity region for the PC-RBC with informed source only. The result is based on the relay operating in the DF scheme.

\begin{lemma}\label{AchievableRatePartiallyCooperativeRBCInformedSource}
For a discrete memoryless partially cooperative relay broadcast channel $p(y_1,y_2|x_1,x_2,s)$ with state information $S^n$ non-causally available at the source only, a rate tuple $(R_0,R_1,R_2)$ is achievable if
\begin{align}
&R_1 \: < \:\: I(U_1;Y_1|U_2X_2)-I(U_1;S|U_2X_2)\nonumber\\
&R_0+R_2 < \min \: \Big\{I(U_2;Y_1|X_2)-I(U_2;S|X_2),\nonumber\\
&\hspace{3.5cm} I(U_2X_2;Y_2)-I(U_2;S|X_2)\Big\},
\label{AchievableRegionPCRBCWithInformedSource}
\end{align}
for some joint distribution of the form $$p(s)p(u_1,u_2,x_1,x_2|s)p(y_1|x_1,x_2,s)p(y_2|y_1,x_2),$$ where $U_1$ and $U_2$ are auxiliary random variables with finite cardinality bounds.\\
\end{lemma}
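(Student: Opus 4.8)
The plan is to establish achievability by a block-Markov superposition coding scheme combined with Gelfand--Pinsker binning at the source, with the relay operating in decode-and-forward. First I would set up the standard $B$-block encoding framework: messages $W_0, W_1, W_2$ are split across $B$ blocks, and in block $b$ the source must convey to the relay (destination~1) enough information about the block-$(b-1)$ private-for-destination-2 message to enable cooperation in the next block. I would generate a codebook for $U_2$ carrying the common and destination-2 information (rate $R_0+R_2$ plus a binning rate to combat the state), then superimpose on it a relay codeword $X_2$ that depends only on the previous block's decoded $U_2$-message (since the relay is \emph{uninformed}, $X_2$ must be independent of $S$ given the cooperation index), and finally superimpose $U_1$ carrying destination-1's private message (again with a binning rate against $S$). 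The channel input $X_1$ is then a deterministic function of $(U_1,U_2,X_2,S)$ drawn from the chosen conditional $p(u_1,u_2,x_1,x_2|s)$.

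Next I would carry out the three decoding analyses. At the relay: in each block it first decodes the $U_2$-message (knowing $X_2$ from the previous block via backward or sliding-window decoding), which by Gelfand--Pinsker joint-typicality arguments succeeds when $R_0+R_2 < I(U_2;Y_1|X_2) - I(U_2;S|X_2)$; then, conditioned on the now-known $U_2$ and $X_2$, it decodes $U_1$, which succeeds when $R_1 < I(U_1;Y_1|U_2X_2) - I(U_1;S|U_2X_2)$. At destination~2: using (say) backward decoding over the $B$ blocks, it jointly decodes $(U_2,X_2)$ from $Y_2$, which requires $R_0+R_2 < I(U_2X_2;Y_2) - I(U_2;S|X_2)$, the binning penalty $I(U_2;S|X_2)$ appearing because the $U_2$-bins were constructed against the state. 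The three constraints together give exactly the region (\ref{AchievableRegionPCRBCWithInformedSource}). The cardinality bounds on $U_1,U_2$ follow from the usual Carath\'eodory/Fenchel--Eggleston support-lemma argument applied to the mutual-information functionals.

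The main obstacle is handling the interaction between Gelfand--Pinsker binning and block-Markov decode-and-forward simultaneously and correctly: one must verify that the number of $U_2$ (resp. $U_1$) codewords per bin is large enough that a state-jointly-typical codeword exists in every block (the covering/binning requirement $R_0+R_2 + \tilde R_2 > I(U_2;S|X_2)$ type conditions), while the total per-block rate stays below what the relay and destination~2 can reliably decode, \emph{and} that the cooperation index shared between source and relay is consistent across blocks so the superposition structure at destination~2's backward-decoding step is valid. In particular, because the relay does not know $S$, the codeword $X_2$ cannot be binned against the state, so one must be careful that the "state cost" $I(U_2;S|X_2)$ is subtracted only once and appears identically in both the relay-side and destination-2-side bounds on $R_0+R_2$; this asymmetry (binning at the source, no binning at the relay) is precisely what forces the conditioning on $X_2$ in the mutual-information terms and is the delicate point in the error-probability bookkeeping. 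Everything else --- the typical-set size estimates, the union bounds over the (polynomially many) blocks, and the final rate region after eliminating the auxiliary binning rates via Fourier--Motzkin --- is routine.
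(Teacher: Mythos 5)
Your proposal is correct and follows essentially the same route as the paper: block-Markov decode-and-forward with regular encoding, superposition coding of $U_1$ over $U_2$ over $X_2$, Gelfand--Pinsker binning of $U_1$ and $U_2$ against $S$ at the source (with no binning of the uninformed relay's $X_2$), and the same three decoding steps yielding the same three constraints. The only cosmetic difference is that you suggest backward decoding at destination~2 where the paper uses sliding-window decoding over two consecutive blocks; both give the bound $R_0+R_2 < I(U_2X_2;Y_2)-I(U_2;S|X_2)$ here.
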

The proof is based on a combination of sliding-window \cite{K78,C82}, superposition-coding \cite{C72} and Gelfand and Pinsker's binning \cite{GP80}. See Section \ref{secV} for an outline of it.
\subsection{D-AWGN Partially Cooperative RBC}\label{secIII_subsecB}
Assume now that the PC-RBC with informed source only is degraded Gaussian,i.e.,
the channel outputs can be written as
\begin{align}
Y_1&=X_1+S+Z_1,\nonumber\\
Y_2&=Y_1+X_2+Z'_2,
\label{DegradedGaussianPC-RBCWithInformedSource}
\end{align}
where $Z_1 \sim \mc N(0,N_1)$ and $Z'_2 \sim \mc N(0,N_2-N_1)$, with $N_1 < N_2$, are independent of each other and independent of the state $S^n \sim \mc N(0,QI)$; and the input sequences $\{x_{1,n}\}$ and $\{x_{2,n}\}$ are subject to average power constraints $P_1$ and $P_2$, respectively.

We obtain an inner bound on capacity region by having the source using a generalized dirty paper coding (GDPC), which allows arbitrary (negative) correlation between the codeword and the SI and can be viewed as a partial state cancellation \cite{KL04}.

\begin{definition} Let
\begin{align}
Q'(\gamma,\rho)&:=(\sqrt{Q}-\sqrt{\rho\bar{\gamma}P_1})^2,\nonumber\\
A(\gamma,\rho,\beta,\alpha)&:=(1-\beta^2)\bar{\rho}\bar{\gamma}P_1\Big((1-\beta^2)\bar{\rho}\bar{\gamma}P_1\nonumber\\
&\hspace{3cm}+Q'(\gamma,\rho)+{\gamma}P_1+N_1\Big),\nonumber\\
B(\gamma,\rho,\beta,\alpha)&:=(1-\alpha)^2(1-\beta^2)\bar{\rho}\bar{\gamma}P_1Q'(\gamma,\rho)\nonumber\\
&+(N_1+{\gamma}P_1)\Big((1-\beta^2)\bar{\rho}\bar{\gamma}P_1+\alpha^2Q'(\gamma,\rho)\Big),\nonumber\\
C(\gamma,\rho,\beta,\alpha)&:=(1-\beta^2)\bar{\rho}\bar{\gamma}P_1\Big(\bar{\rho}\bar{\gamma}P_1+P_2\nonumber\\
&\hspace{0.5cm}+Q'(\gamma,\rho)+2\beta\sqrt{\bar{\rho}\bar{\gamma}P_1P_2}+{\gamma}P_1+N_2\Big),\nonumber\\
D(\gamma,\rho,\beta,\alpha)&:=(1-\alpha)^2(1-\beta^2)\bar{\rho}\bar{\gamma}P_1Q'(\gamma,\rho)\nonumber\\
&+(N_2+{\gamma}P_1)\Big((1-\beta^2)\bar{\rho}\bar{\gamma}P_1+\alpha^2Q'(\gamma,\rho)\Big),\nonumber\\
r_1(\gamma,\rho,\beta,\alpha)&:=\frac{1}{2}\log_2\left(\frac{A(\gamma,\rho,\beta,\alpha)}{B(\gamma,\rho,\beta,\alpha)}\right),\nonumber\\
r_2(\gamma,\rho,\beta,\alpha)&:=\frac{1}{2}\log_2\left(\frac{C(\gamma,\rho,\beta,\alpha)}{D(\gamma,\rho,\beta,\alpha)}\right),\nonumber
\end{align}
for given $0 \leq \gamma \leq 1$, $0 \leq \rho \leq \min\{1,\frac{Q}{\bar{\gamma}P_1}\}$, $0 \leq \beta \leq 1$, $0 \leq \alpha \leq 1$ and where $\bar{\gamma}=1-\gamma$ and $\bar{\rho}=1-\rho$.
\end{definition}

The following theorem gives an inner bound on capacity region for D-AWGN partially cooperative RBC with informed source only.

\begin{theorem}\label{AchievableReginDAWGNPCRBCInformedSource}
Let $\mc R^{in}(\gamma)$ be the set of all rate tuples $(R_0,R_1,R_2)$ satisfying
\begin{subequations}
\begin{align}
R_1 & \leq  \frac{1}{2}\log_2(1+\frac{{\gamma}P_1}{N_1})\\
R_0+R_2 & \leq \max_{\alpha_2,\beta,\rho} \:\:\min \:\: \Big\{r_1(\gamma,\rho,\beta,\alpha_2),r_2(\gamma,\rho,\beta,\alpha_2)\Big\},
\end{align}
\label{AchievableRate_DAWGN_PCRBC_InformedSourceOnly}
\end{subequations}
for some $0 \leq \gamma \leq 1$, where maximization is over $0 \leq \rho \leq \min\{1,\frac{Q}{\bar{\gamma}P_1}\}$, $0\leq \alpha_2 \leq 1$ and $0 \leq \beta \leq 1$. Then, $\mc R^{in}(\gamma)$ is contained in capacity region of the D-AWGN PC-RBC (\ref{DegradedGaussianPC-RBCWithInformedSource}), where state information $S^n$ is non-causally available at the source only.
\end{theorem}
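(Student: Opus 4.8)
\emph{Proof plan.} The plan is to specialize the discrete-memoryless inner bound of Lemma~\ref{AchievableRatePartiallyCooperativeRBCInformedSource} to the degraded Gaussian channel (\ref{DegradedGaussianPC-RBCWithInformedSource}) by evaluating it with a jointly Gaussian input distribution that implements generalized dirty-paper coding at the source. First I would invoke the standard discretization/quantization argument (cf.~\cite{G68}) to show that the region (\ref{AchievableRegionPCRBCWithInformedSource}) continues to hold for memoryless channels with continuous alphabets and average power constraints, so that it suffices to exhibit a single admissible joint law $p(s)p(u_1,u_2,x_1,x_2|s)$ with $\mathbb{E}[X_1^2]\le P_1$, $\mathbb{E}[X_2^2]\le P_2$ whose evaluation of (\ref{AchievableRegionPCRBCWithInformedSource}) reproduces (\ref{AchievableRate_DAWGN_PCRBC_InformedSourceOnly}) for the prescribed $\gamma$, and then to take the union over the free parameters $\rho,\beta,\alpha_2$.

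Fixing $\gamma\in[0,1]$, $\rho\in[0,\min\{1,Q/(\bar\gamma P_1)\}]$, $\beta\in[0,1]$, $\alpha_2\in[0,1]$, I would specify the distribution as follows. Split the source power into $\gamma P_1$ for an independent private layer $X_1'\sim\mc N(0,\gamma P_1)$ carrying $W_1$, and $\bar\gamma P_1$ for the relay-bound part. Within the latter, allocate power $\rho\bar\gamma P_1$ to a scaled negative copy $-\sqrt{\rho\bar\gamma P_1/Q}\,S$ of the state, which turns the state seen by the remaining signal into an effective state $S'$ of variance $Q'(\gamma,\rho)=(\sqrt Q-\sqrt{\rho\bar\gamma P_1})^2$ — here the constraint $\rho\le Q/(\bar\gamma P_1)$ guarantees the cancellation coefficient does not overshoot and that $Q'(\gamma,\rho)\ge 0$. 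The residual power $\bar\rho\bar\gamma P_1$ is split, via $\beta$, into a component coherent with the relay codeword and a fresh component of power $(1-\beta^2)\bar\rho\bar\gamma P_1$; Costa dirty-paper coding against $S'$ then defines the auxiliary $U_2=(\text{relay-bound codeword})+\alpha_2 S'$, which carries $(W_0,W_2)$. The relay transmits $X_2$ of power $P_2$ aligned with the coherent component (so $\mathbb{E}[X_2^2]=P_2$ and $X_1,X_2$ are positively correlated), and $Y_2=Y_1+X_2+Z_2'$ as in (\ref{DegradedGaussianPC-RBCWithInformedSource}); one checks $\mathbb{E}[X_1^2]=\gamma P_1+\bar\rho\bar\gamma P_1+\rho\bar\gamma P_1=P_1$. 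Finally, since the source knows both $S$ and its self-generated relay-bound codeword, it also dirty-paper codes the private layer: I would take $U_1=X_1'+\alpha_1 I$, where $I$ is the minimum-mean-square-error residual of the interference that destination~$1$ still faces on the private layer after it has decoded $U_2$, and $\alpha_1=\gamma P_1/(\gamma P_1+N_1)$ is the Costa coefficient.

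With this choice I would evaluate the three mutual-information expressions in (\ref{AchievableRegionPCRBCWithInformedSource}). For the $R_1$ bound, the penalty $I(U_1;S|U_2X_2)$ is absorbed and a Costa-type computation on the private layer — whose effective additive noise at destination~$1$, once $U_2$, $X_2$ and the source-pre-cancelled residual interference $I$ are accounted for, reduces to $Z_1$ alone — yields $I(U_1;Y_1|U_2X_2)-I(U_1;S|U_2X_2)=\tfrac12\log_2(1+\gamma P_1/N_1)$. For the first term of the $R_0+R_2$ bound, conditioning on $X_2$ makes the coherent component known, so destination~$1$ sees a dirty-paper channel with signal power $(1-\beta^2)\bar\rho\bar\gamma P_1$, interference $S'$ of power $Q'(\gamma,\rho)$, and noise $\gamma P_1+N_1$ (the private layer $X_1'$ acting as noise); writing $I(U_2;Y_1|X_2)-I(U_2;S|X_2)$ as a log-ratio of conditional variances and expanding, the numerator collapses to $A(\gamma,\rho,\beta,\alpha_2)$ and the denominator to $B(\gamma,\rho,\beta,\alpha_2)$, so this term equals $r_1(\gamma,\rho,\beta,\alpha_2)$. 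For the second term, joint decoding of $(U_2,X_2)$ at destination~$2$ with coherent combining gives an effective received power $\bar\rho\bar\gamma P_1+P_2+2\beta\sqrt{\bar\rho\bar\gamma P_1P_2}+Q'(\gamma,\rho)+\gamma P_1+N_2$ and, after subtracting $I(U_2;S|X_2)$, the same manipulation identifies the numerator and denominator with $C(\gamma,\rho,\beta,\alpha_2)$ and $D(\gamma,\rho,\beta,\alpha_2)$, so the term equals $r_2(\gamma,\rho,\beta,\alpha_2)$.

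Combining, every triple with $R_1\le\tfrac12\log_2(1+\gamma P_1/N_1)$ and $R_0+R_2\le\min\{r_1(\gamma,\rho,\beta,\alpha_2),r_2(\gamma,\rho,\beta,\alpha_2)\}$ is achievable for each admissible $(\rho,\beta,\alpha_2)$; taking the union over $(\rho,\beta,\alpha_2)$ — equivalently, maximizing the $R_0+R_2$ bound over them — for the fixed $\gamma$ gives exactly $\mc R^{in}(\gamma)$, which is therefore contained in the capacity region of (\ref{DegradedGaussianPC-RBCWithInformedSource}). I expect the main obstacle to be the bookkeeping in the last two evaluations: carefully propagating the three-way correlation among the source's relay-bound codeword, the relay input $X_2$, and the state through the MMSE/entropy computations, and checking that the resulting rational expressions simplify precisely to the functions $A,B,C,D$ of the Definition. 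A secondary delicate point is the justification in the $R_1$ evaluation that the source can dirty-paper-code against exactly the residual interference that destination~$1$ cannot strip off after decoding $U_2$, so that the private layer effectively sees only the noise $Z_1$ and attains $\tfrac12\log_2(1+\gamma P_1/N_1)$.
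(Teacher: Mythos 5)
Your proposal follows essentially the same route as the paper: extend Lemma~\ref{AchievableRatePartiallyCooperativeRBCInformedSource} to continuous alphabets, split the source power as $\gamma P_1$ for the private layer and $\bar\gamma P_1$ for the relay-bound part with a fraction $\rho$ spent on partial state cancellation (yielding the effective state $S'$ of power $Q'(\gamma,\rho)$), then apply two superimposed Costa codes — $U_2$ against $S'$ and $U_1$ against the residual $(1-\alpha_2)S'$ left after destination~$1$ strips $U_2$ — and evaluate the three mutual-information terms to obtain $\tfrac12\log_2(1+\gamma P_1/N_1)$, $r_1$, and $r_2$. The construction, parameterization, and level of algebraic detail match the paper's proof, so the proposal is correct and not a genuinely different argument.
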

\begin{proof}
The source uses a combination of superposition coding and generalized DPC. More specifically, we decompose the source input $X_1$ as
\begin{subequations}
\begin{align}
X_1&=X'_1+U,\\
\label{FormOfGDPC}
U&=-\sqrt{\frac{\rho\bar{\gamma}P_1}{Q}}S+U_{w},
\end{align}
\end{subequations}
where $X'_1$ (of power ${\gamma}P_1$), $U_w$ (of power $\bar{\rho}\bar{\gamma}P_1$) and $S$ are independent, and $\mathbb{E}[U_wX_2]=\beta\sqrt{\bar{\rho}\bar{\gamma}P_1P_2}$.
With this choice of input signals, channels $Y_1$ and $Y_2$ in (\ref{DegradedGaussianPC-RBCWithInformedSource}) become
\begin{subequations}
\begin{align}
&Y'_1=X'_1+U_w+S'+Z_1\\
&Y'_2=U_w+X_2+S'+X'_1+Z_1+Z'_2,
\end{align}
\label{EquivalentChannel}
\end{subequations}
where the Gaussian state $S'=(1-\sqrt{\frac{\rho\bar{\gamma}P_1}{Q}})S$ is known to the source and has power $Q'(\rho,\gamma)=(\sqrt{Q}-\sqrt{\rho\bar{\gamma}P_1})^2$. Then, given that the result of Lemma \ref{AchievableRatePartiallyCooperativeRBCInformedSource} which has been established for the discrete memoryless case can be extended to memoryless channels with discrete time and continuous alphabets using standard techniques \cite{G68}, the proof of achievability follows by evaluating the region (\ref{AchievableRegionPCRBCWithInformedSource}) (in which $Y_1$, $Y_2$ and $S$ are replaced by $Y'_1$, $Y'_2$ and $S'$, respectively) with the following choice of input distribution:
\begin{subequations}
\begin{align}
&U_1 \sim \mc N({\alpha}_1(1-\alpha_2)S',{\gamma}P_1),\\
&U_2 \sim \mc N({\alpha}_2S',\bar{\rho}\bar{\gamma}P_1),\\
&X_2 \sim \mc N(0,P_2),\\
&X_1=U_1+U_2-(\alpha_1+\alpha_2-\alpha_1\alpha_2+\frac{\sqrt{\rho\bar{\gamma}P_1}}{\sqrt{Q}-\sqrt{\rho\bar{\gamma}P_1}})S',
\end{align}
\label{InputDistributionForDAWGN_PCRBC_InformedSource}
\end{subequations}
where $\alpha_1={\gamma}P_1/({\gamma}P_1+N_1)$ and $0 \leq \alpha_2 \leq 1$. Furthermore, we let $\mathbb{E}[U_wX_2]=\beta\sqrt{\bar{\rho}\bar{\gamma}P_1P_2}$ and choose $X'_1$, $X_2$ and $S'$ to be independent. Through straight algebra which is omitted for brevity, it can be shown that (\ref{InputDistributionForDAWGN_PCRBC_InformedSource}) achieve the rates in (\ref{AchievableRate_DAWGN_PCRBC_InformedSourceOnly}) to complete the proof.
\end{proof}
The intuition  for (\ref{InputDistributionForDAWGN_PCRBC_InformedSource}) is as follows. Consider the channel (\ref{EquivalentChannel}). The source allocates a fraction ${\gamma}P_1$ of its power to send message $W_1$ (input $X'_1$) to destination $1$ and the remaining power, $\bar{\gamma}P_1$, to send message $W_2$ (input $U$) to destination $2$, through the relay. However, since the relay does not know the state $S^n$, the source allocates a fraction $\rho$ ($0 \leq \rho \leq \min\{1,\frac{Q}{\bar{\gamma}P_1}\}$) of the power $\bar{\gamma}P_1$ to cancel the state so that the relay can benefit from this cancellation. Then, it uses the remaining power, $\bar{\rho}\bar{\gamma}P_1$, for pure information transmission (input $U_w$).

For the transmission of message $W_2$ to destination $2$, we treat the interference $X'_1$ combined with the channel noise $Z_1+Z'_2$ as an unknown Gaussian noise. Hence, the source uses a DPC 
\begin{subequations}
\begin{align}
\label{U2_DAWGN_PCRBC_InformedSource}
&U_2 \sim \mc N({\alpha}_2S',\bar{\rho}\bar{\gamma}P_1),\\
&U_w=U_2-\alpha_2S'.
\end{align}
\end{subequations}
Furthermore, the relay can decode $U_2=U_w+\alpha_2S'$ and peal it of to make the channel to the relay equivalent to
\begin{align}
Y'_1&=Y_1-U_2=X'_1+(1-\alpha_2)S'+Z_1.
\end{align}
Thus, for the transmission of message $W_1$ to destination $1$, the source uses another DPC
\begin{subequations}
\begin{align}
\label{U1_DAWGN_PCRBC_InformedSource}
&U_1 \sim \mc N({\alpha}_1(1-\alpha_2)S',{\gamma}P_1),\\
&X'_1=U_1-\alpha_1(1-\alpha_2)S',
\end{align}
\end{subequations}
where $(1-\alpha_2)S'$ is the known state and $\alpha_1={\gamma}P_1/({\gamma}P_1+N_1)$. This gives us the rate $\frac{1}{2}\log_2(1+\frac{{\gamma}P_1}{N_1})$ for rate $R_1$.

{\it Remark 1~:} Here, we have used in essence two superimposed DPCs, with one of them being generalized. The first approach which suggests itself and which consists in using two standard (not generalized) DPCs corresponds to the special case of $\rho=0$. Also, note that, for the GDPC, there is no loss in restricting the correlation (between the source input $U$ and the state $S$) to have the form in (\ref{FormOfGDPC}), in this case.

{\it Remark 2~:} A straightforward outer bound for the capacity region of the D-AWGN partially-cooperative RBC with only the source being informed is given by (\ref{CapacityRegionD-AWGNPartiallyCooperativeRBCWithoutState}), for this is the capacity region of the D-AWGN PC-RBC without state or with state known everywhere.

{\it Remark 3~:}  The results of Lemmas \ref{AchievableRatePartiallyCooperativeRBCWithState} and \ref{AchievableRatePartiallyCooperativeRBCInformedSource} and Theorems \ref{PPED-AWGNPartiallyCooperativeRBC} and \ref{AchievableReginDAWGNPCRBCInformedSource} specialize to the relay channel (RC), by letting destination $1$ decode no private message (i.e., $R_1$=0). For the case of a RC with informed source and relay, this gives us the achievability of the rate
\begin{align}
&R=\max_{p(u_1,u_2,x_1,x_2|s)} \:\: \min \:\: \Big\{I(U_1;Y_1|SU_2),I(U_1U_2;Y_2)\nonumber\\
&\hspace{5.6cm}-I(U_1U_2;S)\Big\}.
\label{AchievableRateRCWithState}
\end{align}
Note that, even thought this rate is in general smaller than the one given in \cite[Lemma 3]{KSS04} (in which $I(U_1;Y_1|SX_2)$ is used instead of $I(U_1;Y_1|SU_2)$  in (\ref{AchievableRateRCWithState})), the two rates coincide in the Gaussian (not necessarily physically degraded) case. To see that, note that in the Gaussian case, $X_2$ is a linear combination of $U_2$ and $S$ \cite{C83}, and hence $I(U_2S;Y_1)=I(X_2S;Y_1)$. Then, writing
\begin{subequations}
{\footnotesize
\begin{align}
I(U_1U_2SX_2;Y_1)&=I(X_2S;Y_1)+I(U_1;Y_1|SX_2)+I(U_2;Y_1|SX_2U_1),\nonumber\\
                 &=I(U_2S;Y_1)+I(U_1;Y_1|SU_2)+I(X_2;Y_1|SU_1U_2),\nonumber
\end{align}}
\end{subequations}
and noticing that $I(X_2;Y_1|SU_1U_2)=0$ (since $p_{X_2|U_2S}=0,1$) and $I(U_2;Y_1|SX_2U_1)=0$ (since $(U_1,U_2) \ominus (X_1,X_2,S)\ominus (Y_1,Y_2)$ forms a Markov chain under the specified distribution in (\ref{AchievableRateRCWithState})), we get $I(U_1;Y_1|SX_2)=I(U_1;Y_1|SU_2)$.
\section{ Numerical Example}\label{secIV}
This section illustrates the achievable rate regions for D-AWGN PC-RBC and physically degraded Gaussian RC, with the help of an example. We illustrate the effect of applying GDPC in improving the throughput when only the source is informed. 

Fig.\ref{CapacityRegionPCRBCWithInformedTransmitter} depicts the inner bound using generalized DPC in Theorem \ref{AchievableRatePartiallyCooperativeRBCInformedSource}. Also shown for comparison are: an inner bound using DPC alone (i.e., GDPC with $\rho=0$) and an outer bound, obtained by assuming both the source and the relay being informed. Rate curves are depicted for both D-AWGN PC-RBC and physically degraded Gaussian RC. We see that even though the state is known only at the source, both the source and the relay benefit. 

For the physically degraded Gaussian RC, the improvement is mainly visible at high $\text{SNR}=P_1/N_1$ [dB]. This is because, the relay being operating in DF, cooperation between the source and the relay is {\it more efficient} at high $\text{SNR}$. In such range of $\text{SNR}$, capacity of the degraded Gaussian RC is driven by the amount of information that the source and the relay can, together, transfer to the destination (given by the term $I(X_1X_2;Y_2)$ in the capacity of the degraded RC). At small $\text{SNR}$ however, capacity of the degraded Gaussian RC is constrained by the broadcast bottleneck (term $I(X_1;Y_2|X_2)$). Hence, in such range of $\text{SNR}$, there is no need for the source to assist the relay by (partially) cancelling the state for it (since this would be accomplished at the cost of the power that can be allocated to transmit information from the source to the relay). An alternative interpretation is as follows. At high $\text{SNR}$, the source and the relay form two fictitious users (with only one of them being informed) sending information to same destination, over a MAC. The sum rate over this MAC is more enlarged (by the use of GDPC) at high $\text{SNR}$. This interpretation conforms with the result in \cite{KL04} for a MAC with only one informed encoder. However, note this interpretation deviates from \cite{KL04}, in that the fictitious MAC considered  here has correlated inputs).

For the D-AWGN PC-RBC, we see that both destination $1$ and destination $2$ benefits from using GDPC at the source. This can be easily understood as follows. Since applying GDPC at the source improves rate $R_2$ for destination $2$ (w.r.t. using DPC alone), the source needs lesser power, for the same amount of information to be transmitted to destination $2$ (i.e., for the same $R_2$). Hence, the power put aside can be used to increase rate $R_1$ (see the zoom on the top left of Fig.~\ref{CapacityRegionPCRBC}).
\begin{figure}
\centering
\subfigure[D-AWGN Partially Cooperative RBC]
{
\label{CapacityRegionPCRBC}
\includegraphics[width=\linewidth]{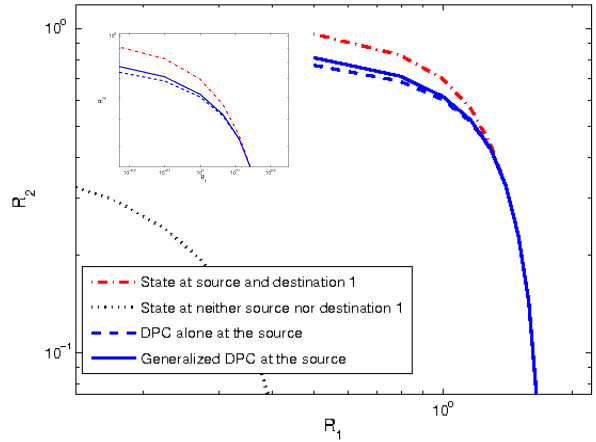}
}
\subfigure[Physically Degraded Gaussian Relay Channel]
{
\includegraphics[width=\linewidth]{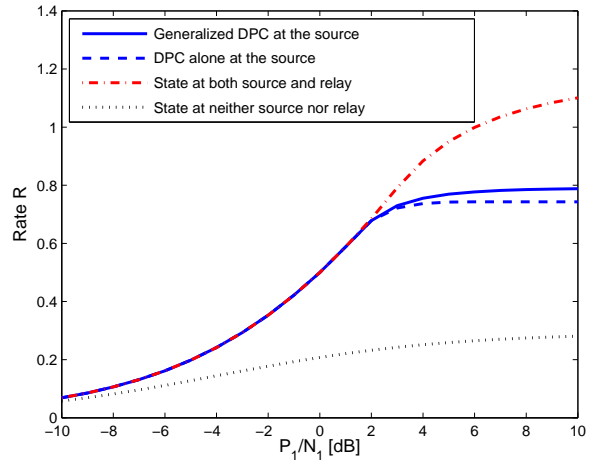}
}
\caption{Achievable rate regions for D-AWGN PC-RBC and physically degraded Gaussian RC. (a) $P_1=P_2=1=Q$, $N_1=10N_2=1$. (b) $P_1=P_2=1$, $Q=2$, $N_2=1$.}
\label{CapacityRegionPCRBCWithInformedTransmitter}
\end{figure}
\section{Concluding remarks}\label{secVI}
In many practical communication systems that exploit node cooperation to increase throughput or improve reliability, different (possibly not co-located) cooperating nodes rarely have access to the same state information (SI) about the channel (interference, fading, etc.). In this case, a more general approach to address node cooperation in such channels is to consider different SI at the different nodes. Also, as these nodes rarely have the ability to measure directly, or estimate, the channel state, a more involved approach would be to account for the cost of conveying SI (e.g., by a third party) to the different nodes (as already done for MAC, in \cite{CS05}). In this paper, we have considered the basic three-node network in which two nodes transmit information over a partially cooperative relay broadcast channel (PC-RBC). We investigated two different situations: when both the source and the relay non-causally know the channel state and, when only the source knows the state. One important finding in the latter case is that, in the degraded Gaussian case, the source can still help the relay (which suffers from the interfering channel state), by using generalized dirty paper coding (GDPC),i.e., DPC combined with partial state cancellation. 
\section{Outline of Proof for Lemma \ref{AchievableRatePartiallyCooperativeRBCInformedSource}}\label{secV}
In the following, we denote the set of strongly jointly $\epsilon$-typical sequences (see \cite[Chapter 14.2]{CT91}) with distribution $p(x,y)$ as $T_{\epsilon}^n[\dv x,\dv y]$. We define $T_{\epsilon}^n[\dv x,\dv y|x^n]$ as
\begin{equation}
T_{\epsilon}^n[\dv x,\dv y|x^n]=\{y^n\::\:(x^n,y^n) \in T_{\epsilon}^n[\dv x,\dv y]\}.
\end{equation}
Note that it suffices to prove the result for the case without common message (i.e. $R_0=0$). This is because one can view part of the rate $R_2$ to be common rate $R_0$, since destination $1$ also decodes message $W_2$.

We assume that the source uses a combination of superposition coding \cite[Chapter 14.6]{CT91} and Gelfand and Pinsker's binning \cite{GP80}. We adopt the regular encoding/sliding window decoding strategy \cite{C82} for the decode-and-forward scheme. Decoding is based on a combination of joint typicality and sliding-window.

We consider a transmission over $B$ blocks, each with length $n$. A each of the first $B-1$ blocks, a pair of messages $(w_{1,i}, w_{2,i})\in \mscr W_1 \times \mscr W_2$ is sent, where $i$ denotes the index of the block, $i=1,\hdots,B-1$. For fixed $n$, the rate pair $(R_1\frac{B-1}{B},R_2\frac{B-1}{B})$ approaches $(R_1,R_2)$ as $B \longrightarrow +\infty$. We use random codes for the proof.

Fix a joint probability distribution of $U_1,U_2,X_1,X_2,S,Y_1,Y_2$ of the form $$p(s)p(u_1,u_2,x_1,x_2|s)p(y_1|x_1,x_2,s)p(y_2|y_1,x_2),$$ where $U_1$ and $U_2$ are two auxiliary random variables with bounded alphabet cardinality which stand for the information being carried by the source input $X_1$ and intended for destination $1$ and destination $2$, respectively.

Fix $\epsilon > 0$. Let
\begin{eqnarray}
J_1 &\triangleq& 2^{n(I(U_1;S|U_2X_2)+2\epsilon)},\nonumber\\
J_2 &\triangleq& 2^{n(I(U_2;S|X_2)+2\epsilon)},\nonumber\\
M_1 &\triangleq& 2^{n(R_1-4\epsilon)},\nonumber\\
M_2 &\triangleq& 2^{n(R_2-6\epsilon)}.\nonumber
\end{eqnarray}
\subsubsection*{Random codebook generation} We generate two statistically independent codebooks  (codebooks $1$ and $2$) by following the steps outlined below twice. These codebooks will be used for blocks with odd and even indices, respectively (see the {\it encoding} step).
\begin{enumerate}
\item[1.] Generate $M_2$ i.i.d. codewords $\dv x_2(w'')$, of length $n$ each, indexed by $w'' \in \{1,2,\hdots,M_2\}$, and each with distribution $\Pi_{i}p(x_{2i})$.
\item[2.] For each $\dv x_2(w'')$, generate a collection $\dv b(\dv x_2(w''))$ of $\dv u_2$-vectors
\begin{align}
&{\dv b}\Big(\dv x_2(w'')\Big)=\Big\{{\dv u_2}_{j_2,w'}({\dv x_2}(w'')),\:j_2 \in \{1,2,\cdots,J_2\},\nonumber\\
&\hspace{5cm} w' \in \{1,2,\cdots,M_2\}\Big\}\nonumber
\end{align}
independently of each other, each with distribution $\Pi_{i}p(u_{2i}|x_{2i}(w''))$.
\item[3.] For each $\dv x_2(w'')$, for each ${\dv u_2}_{j_2,w'}({\dv x_2}(w''))$, generate a collection $\dv a$ of $\dv u_1$-vectors
\begin{align}
{\dv a}\Big(\dv x_2(w''),&{\dv u_2}_{j_2,w'}({\dv x_2}(w''))\Big)=\Big\{{\dv u_1}_{j_1,w}({\dv u_2}_{j_2,w'}({\dv x_2}(w''))),\nonumber\\
&\hspace{0.5cm}j_1 \in \{1,2,\cdots,J_1\}, w \in \{1,2,\cdots,M_1\}\Big\}\nonumber
\end{align}
independently of each other, each with distribution $\Pi_{i}p(u_{1i}|u_{2i}(j_2,w'),x_{2i}(w''))$.  Reveal the collections $\dv a$ and $\dv b$ and the sequences $\{\dv x_2\}$ to the source and destinations $1$ and $2$.
\end{enumerate}
\subsubsection*{Encoding} We encode messages using codebooks $1$ and $2$, respectively, for blocks with odd and even indices. Using independent codebooks for blocks with odd and even indices makes the error events corresponding to these blocks independent and hence, the corresponding probabilities easier to evaluate.

At the beginning of block $i$, let $(w_{1,i},w_{2,i})$ be the new message pair to be sent from the source and $(w_{1,{i-1}},w_{2,{i-1}})$ be the pair sent in the previous block $i-1$.  Assume that at the beginning of block $i$, the relay has decoded $w_{2,{i-1}}$ correctly. The relay sends $\dv x_2(w_{2,i-1})$ . Given a state vector $\dv s=s^n$, let $j_2(\dv s,w_{2,i-1},w_{2,i})$ be the smallest integer $j_2$ such that
\begin{equation}
{\dv u_2}_{j_2,w_{2,i}}(\dv x_2(w_{2,i-1})) \in  T_{\epsilon}^n[\dv u_2,\dv x_2,\dv s|x_2^n].
\label{U2SGivenX2JointlyTypical}
\end{equation}
If such $j_2$ does not exist, set $j_2(\dv s,w_{2,i-1},w_{2,i})=J_2$. Sometimes, we will use $j_2^{\star}$ as shorthand for the chosen $j_2$. Let $j_1(\dv s,w_{2,{i-1}},w_{2,i},w_{1,i})$ be the smallest integer $j_1$ such that
\begin{align}
&\Big({\dv u_1}_{j_1,w_{1,i}}({\dv u_2}_{j_2^{\star},w_{2,i}}(\dv x_2(w_{2,i-1})) ), \dv s\Big)\nonumber\\
&\hspace{3cm} \in  T_{\epsilon}^n[\dv u_1,\dv u_2,\dv x_2, \dv s|u_2^n,x_2^n].
\label{U1SGivenU2X2JointlyTypical}
\end{align}
If such $j_1$ does not exist, set $j_1(\dv s,w_{2,{i-1}},w_{2,i},w_{1,i})=J_1$. Sometimes, we will use $j_1^{\star}$ as shorthand for the chosen $j_1$. Finally, generate a vector of input letters $\dv x_1 \in \mscr X_1^n$ according to the memoryless distribution defined by the $n-$product of
\begin{equation}
\Pi_ip(x_{1i}|u_{1i}(\dv u_2(\dv x_2)),u_{2i}(\dv x_2),s_i)
\end{equation}
\subsubsection*{Decoding} The decoding procedures at the end of block $i$ are as follows.
\begin{enumerate}
\item[1.] destination $1$, having known $w_{2,{i-1}}$, declares that $\hat{w}_{2,i}$ is sent if there is a unique $\hat{w}_{2,i}$ such that
\begin{align}
\Big({\dv u_2}_{j_2,\hat{w}_{2,i}}(\dv x_2(w_{2,i-1})),\dv y_1(i)\Big) \in  T_{\epsilon}^n[\dv u_2,\dv x_2, \dv y_1(i)|x_2^n].\nonumber
\end{align}
It can be shown that the decoding error in this step is small for sufficiently large $n$ if
\begin{equation}
R_2 < I(U_2;Y_1|X_2)-I(U_2;S|X_2).
\end{equation}
\item[2.] destination $1$, having known $w_{2,i-1}$ and $w_{2,i}$, declares  that the message $\hat{w}_{1,i}$ is sent if there is a unique $\hat{w}_{1,i}$ such that
\begin{align}
\Big({\dv u_1}_{j_1,\hat{w}_{1,i}}({\dv u_2}_{j_2,\hat{w}_{2,i}}(&\dv x_2(w_{2,i-1}))),\dv y_1(i)\Big)\nonumber\\
&\in  T_{\epsilon}^n[\dv u_1,\dv u_2,\dv x_2, \dv y_1(i)|x_2^n,u_2^n].\nonumber
\end{align}
It can be shown that the decoding error in this step is small for sufficiently large $n$ if
\begin{equation}
R_1 < I(U_1;Y_1|U_2X_2)-I(U_1;S|U_2X_2).
\end{equation}
\item[3.] Destination $2$ knows $w_{2,{i-2}}$ and decodes $w_{2,{i-1}}$ based on the information received in block $i-1$ and block $i$. It declares that the message $\hat{w}_{2,{i-1}}$ is sent if there is a unique $\hat{w}_{2,{i-1}}$ such that
\begin{subequations}
\begin{align}
&\bigg( \dv x_2(\hat{w}_{2,i-1}),\dv y_2(i)\bigg) \in T_{\epsilon}^n[\dv x_2,\dv y_2],\nonumber\\
&\bigg({\dv u_2}_{j_2,\hat{w}_{2,i-1}}(\dv x_2(w_{2,i-2})),\dv y_2(i-1) \bigg) \in T_{\epsilon}^n[\dv u_2,\dv x_2,\dv y_2|x_2^n].\nonumber
\end{align}
\end{subequations}
It can be shown that the decoding error in this step is small for sufficiently large $n$ if
\begin{equation}
R_2 < I(U_2X_2;Y_2)-I(U_2;S|X_2).
\end{equation}
\end{enumerate}
\bibliography{paperISIT2007}

\end{document}